\definecolor{darkblue}{rgb}{0,0 ,0.542}
\definecolor{lightgreen}{rgb}{.9,1,.9}
\definecolor{lightred}{rgb}{1,.415,.415}
\definecolor{lightblue}{rgb}{.415,.415,1}
\newcolumntype{L}[1]{>{\raggedright\arraybackslash}p{#1}}
\newcolumntype{C}[1]{>{\centering\arraybackslash}p{#1}}
\newcolumntype{R}[1]{>{\raggedleft\arraybackslash}p{#1}}
\theoremstyle{plain} 
\newtheorem{theorem}{Theorem}
\newtheorem{lemma}{Lemma}
\newtheorem{assumption}{Assumption}
\newtheorem*{suptheorem}{Theorem}
\def\defn{\,\coloneqq\,}
\def\prox{{\mathsf{prox}}}
\def\min{\mathop{\mathsf{min}}}
\def\R{\mathbb{R}}
\def\E{\mathbb{E}}
\def\ebm{{\bm{e}}}
\def\xbm{{\bm{x}}}
\def\zbm{{\bm{z}}}
\def\ybm{{\bm{y}}}
\def\zbm{{\bm{z}}}
\def\nbm{{\bm{n}}}
\def\ubm{{\bm{u}}}
\def\vbm{{\bm{v}}}
\def\varepsilonbar{{\overline{\varepsilon}}}
\def\dsf{{\mathsf{d}}}
\def\Dsfhat{{\widehat{\mathsf{D}}}}
\def\Ibm{{\bm{I}}}
\def\thetabm{{\bm{\theta }}}
\def\Ibm{{\bm{I}}}
\def\Ncal{{\mathcal{N}}}
\def\Lcal{{\mathcal{L}}}
\def\Dsf{{\mathsf{D}}}
\def\Isf{{\mathsf{I}}}
\def\Rsf{{\mathsf{R}}}
\def\Tsf{{\mathsf{T}}}
\def\Tsf{{\mathsf{T}}}
\def\Dsf{{\mathsf{D}}}
\def\Isf{{\mathsf{I}}}
\def\Dcal{{\mathcal{D}}}
\def\xbmast{{\bm{x}^\ast}}
\def\xbmbar{{\overline{\bm{x}}}}
\def\xbmhat{{\widehat{\bm{x}}}}
\def\Dhat{{\mathsf{\widehat{D}}}}
\def\Rsfhat{{\mathsf{\widehat{R}}}}
\def\argmin{\mathop{\mathsf{arg\,min}}} 
\title{PnP Restoration  with Domain Adaptation for SANS}
\date{}
\author{Shirin Shoushtari, Edward P. Chandler, Jialiang Zhang, Manjula Senanayake,\\ Sai Venkatesh Pingali, Marcus Foston, and Ulugbek S. Kamilov.\\ 
\small Washington University in St. Louis, MO 63130 USA\\
\small Oak Ridge National Laboratory, Oak Ridge, Tennessee 37831, USA}
\begin{document}

\maketitle

\let\thefootnote\relax\footnote{This material is based upon work supported by the Gordon and Betty Moore Foundation grant 11396.}

\vspace{-3em} 
\begin{abstract}
\medskip\noindent
Small Angle Neutron Scattering (SANS) is a non-destructive technique utilized to probe the nano- to mesoscale structure of materials by analyzing the scattering pattern of neutrons. Accelerating SANS acquisition for in-situ analysis is essential, but it often reduces the signal-to-noise ratio (SNR), highlighting the need for methods to enhance SNR even with short acquisition times. While deep learning (DL) can be used for enhancing SNR of low quality SANS, the amount of experimental data available for training is usually severely limited. We address this issue by proposing a \textbf{P}lug-and-play \textbf{R}estoration for \textbf{SANS} (\textbf{PR-SANS}) that uses domain-adapted priors. The prior in PR-SANS is initially trained on a set of generic images and subsequently fine-tuned using a limited amount of experimental SANS data. We present a theoretical convergence analysis of PR-SANS by focusing on the error resulting from using inexact domain-adapted priors instead of the ideal ones.
We demonstrate with experimentally collected SANS data that PR-SANS can recover high-SNR 2D SANS detector images from low-SNR detector images, effectively increasing the SNR. This advancement enables a reduction in acquisition times by a factor of 12 while maintaining the original signal quality.
\end{abstract}

\section{Introduction}
Small angle neutron scattering (SANS) is widely used technique for investigating material structure at mesoscopic scale~\cite{grillo200813}.
Long exposure to the neutron beam is often necessary for detailed structural information from the SANS experiments. 
There has been an increased demand for insights into the dynamic behaviors of mesoscale structures, motivating the development of in-situ and operando SANS. Traditionally, a limitation of SANS has been its lower neutron flux compared to x-ray scattering, leading to longer acquisition times and diminished temporal resolution. Recent advancements have seen the introduction of SANS instruments equipped with larger and faster detectors, enhanced guide coating technologies, and high-flux neutron sources, aiming to overcome these challenges~\cite{hollamby2013practical, hainbuchner2000new}. Despite these improvements, obtaining detailed structural information often requires significant signal averaging and prolonged acquisitions times, complicating high-quality data acquisition for applications like in-situ SANS. Consequently, attempts to expedite data collection frequently compromise the temporal resolution or SNR of the measurements~\cite{granroth2018event}. 

The recovery of high-SNR SANS detector images from low-SNR SANS measurements can be formulated as an
image restoration problem. Deep learning (DL) has been extensively used as a data-driven strategy to address a wide range of imaging inverse problems~\cite{McCann.etal2017, Lucas.etal2018}. Traditional DL approaches are based on training a neural network to map low-quality observations to high-quality images~\cite{Metzler.etal2018, Zhang.etal2017a, Meinhardt.etal2017, zafari2023frequency,Dong.etal2019}. There is an increasing interest in leveraging DL for enhancing the quality of SANS data~\cite{chang2020deep}, predicting structural geometry of sample material~\cite{mutti2019deep}, and performing data analysis and feature extraction from SANS~\cite{do2020small, doucet2020machine}. 
Despite the promising performance of DL in computationally enhancing quality of SANS data~\cite{chang2020deep}, the lack of experimentally collected SANS data poses a challenge for training effective DL models. Domain adaptation is a promising strategy to address the issue of limited data availability for training DL models, a challenge often faced in SANS experiments where acquiring large-scale datasets is impractical. Domain adaptation involves pre-training a DL model on one dataset (domain) and adapting it to perform effectively on a different domain~\cite{gopalan2011domain,tommasi2012improving, Shoushtari.etal2023}.

Plug-and-play (PnP) priors~\cite{Venkatakrishnan.etal2013, Sreehari.etal2016} is a computational framework widely-used for designing DL methods for solving imaging inverse problems. PnP methods leverage powerful DL denoisers as priors by combining them with the forward model of the imaging instruments. PnP has been successfully used in various imaging inverse problems including super-resolution, phase retrieval, microscopy, and medical imaging~\cite{Metzler.etal2018, Meinhardt.etal2017, Zhang.etal2021b, liu2022recovery} (see recent reviews~\cite{Ahmad.etal2020, kamilov2023plug}).

Despite the success of PnP on a number of imaging inverse problems, it has never been applied for restoring high-SNR SANS data. In this paper, we present the first investigation into PnP for restoring high-SNR SANS data collected in an accelerated fashion that results in low-SNR SANS detector images. Our PnP restoration method, called PR-SANS, restores high-SNR detector images from low-SNR measurements with shorter acquisition time, thereby mitigating the need for long acquisition times in SANS experiments. To address the limited amount of experimentally collected SANS datasets for training deep priors for PnP, we propose a domain adaptation strategy that adapts a generic image prior (trained on natural grayscale images) into a dedicated SANS prior. We present a theoretical convergence analysis of PR-SANS that considers the errors introduced by using adapted deep priors instead of ideal ones. We show on experimentally collected SANS data that PR-SANS can restore high-SNR SANS detector images while significantly reducing the acquisition times for detector image of equivalent SNR ($12\times$ faster data acquisition). Our results numerically show the relationship between the amount of training SANS data and the final restoration performance. 


\begin{figure}[!t]
    \centering
    \includegraphics[width=1\textwidth]{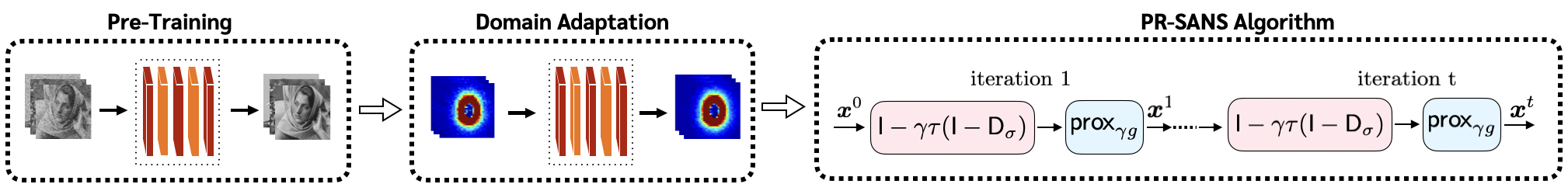}
    \caption{Illustration of the proposed method: Initially, a deep prior is pre-trained on a dataset of natural grayscale images. It is then adapted to the target dataset (SANS detector images) using a limited number of samples. Once adapted, this prior is plugged into the PR-SANS algorithm to effectively restore the 2D SANS detector images. }
    \label{fig:shematic}
\end{figure}

\section{Proposed Method}
\textbf{Image restoration.} We propose to formulate the recovery of the clean image $\xbm \in \R^n$ from the noisy SANS measurement $\ybm \in \R^n$ as an image restoration problem of form
\begin{equation}\label{eq:linearSys}
\ybm = \xbm + \ebm, 
\end{equation}
where $\ebm \in \R^n$ is the noise. It is common to formulate restoration problems as an   optimization problem
\begin{equation}
    \label{eq:inverseproblem}
    \xbmhat = \argmin_{\xbm \in \R^n} f(\xbm) \quad \text{with} \quad f(\xbm) = g(\xbm) + h(\xbm), 
\end{equation}
where $g$ denotes the data-fidelity term that measures the consistency of recovered signal with the measurement $\ybm$ and $h$ is a regularizer that imposes prior knowledge on the recovered signal. Widely-used data-fidelity and regularization terms are least-squares  
$g(\xbm) = \frac{1}{2}\|\xbm - \ybm\|_2^2$ and total variation (TV) regularizer $h(\xbm) = \tau \|\bm{D}(\xbm)\|_1$, where $\bm{D}$ is image gradient and $\tau>0$ is the regularization parameter~\cite{Rudin.etal1992}.

\medskip
\textbf{PnP.} PnP is one of the most popular approaches for solving imaging inverse problems where image priors are specified using pre-trained image denoisers~\cite{kamilov2023plug}. Our proposed PR-SANS is a variant of PnP with the update rule
\begin{equation}
    \label{eq:updaterule}
    \xbm^k = \prox_{\gamma g}(\xbm^{k-1} - \gamma \tau \Rsfhat(\xbm^{k-1}))\quad \text{with} \quad \Rsfhat = \Isf - \Dsfhat_\sigma,
\end{equation}
where $\Dsfhat_\sigma$ is the domain-adapted deep prior with parameter $\sigma >0$ for controlling the noise strength, $\tau>0$ is the regularization parameter, and $\gamma>0$ is the penalty parameter. We denote the ideal SANS prior as mapping $\Dsf_\sigma : \R^n \to \R^n$ pre-trained to solve the following restoration problem
\begin{equation}
    \label{eq:noisemodel}
    \zbm = \xbm + \nbm \quad \text{with} \quad \nbm \sim \Ncal(0, \sigma^2 \Ibm), 
\end{equation}
where $\nbm \in \R^n$ is additive white Gaussian noise (AWGN) of variance $\sigma^2$~\cite{bigdeli2017,Reehorst.Schniter2019}. 
As common in PnP methods~\cite{kamilov2023plug}, we assume that $\Dsf_\sigma$ performs MMSE estimation of $\xbm \in \R^n$ for the problem~\eqref{eq:noisemodel}
\begin{equation}\label{eq:mmsedenoiser}
    \Dsf_{\sigma}(\zbm) = \E[\xbm|\zbm] = \int_{\R^n} \xbm p_{\xbm|\zbm}(\xbm;\zbm) \dsf \xbm, 
\end{equation}
where $p_\xbm$ represent the  distribution of target dataset.
When the PnP method converges, it converges to $\xbmast \in \R^n$ that satisfies 
\begin{equation}
    \label{eq:fixedpoint}
    \nabla g(\xbmast) + \tau \Rsf(\xbmast) = \bm{0},
\end{equation}
where $\xbmast \in \R^n$ is the fixed point. 
We discuss the connection of fixed points with stationary points of an objective function $f = g+h$ for some regularizer $h$ in Section~\ref{sec:app}. 

\medskip
\textbf{Domain adaptation.}
In traditional DL frameworks for addressing image restoration tasks, it is common
to train convolutional neural networks (CNNs) to map observations to desired images. In this approach, the training and testing data are often drawn from the same distribution~\cite{Ongie.etal2020,Wang.etal2016, Jin.etal2017, Kang.etal2017}. 
However, domain shift can occur when training and testing data originate from different distributions~\cite{Jalal.etal2021a, guan2021domain, Darestani.etal2021, sun2020test, shoushtari2022deep}. 
Domain adaptation techniques can address this issue by generalizing a DL model trained on a source domain to a domain of interest (target domain)~\cite{tommasi2012improving, tommasi2013learning, Tirer.Giryes2019}. Domain adaptation is particularly beneficial in scenarios  where the available training data is insufficient~\cite{farahani2021brief }. Among adaptation techniques, a simple strategy involves fine-tuning a pre-trained neural network using limited available data from the target domain. This method relies on the premise that image features can be transferred across different domains~\cite{farahani2021brief,gopalan2011domain,Shoushtari.etal2023 }. 

Our domain adaptation relies on a dataset of natural images $\Dcal_s = \{\xbm_i^s,\zbm_i^s \}_{i=1}^N$ as the source dataset. $\Dcal_s$ is used to train the initial denoiser $\Dsf^s_\sigma$ by minimizing the MSE loss 
\begin{equation}
\label{Eq:MSELoss}
\Lcal(\Dsf^s_\sigma) = \E \left[ \|\xbm^s - \Dsf^s_{\sigma}(\zbm^s)\|_2^2 \right]. 
\end{equation}
The parameters of this initial denoiser, denoted by $\thetabm^s$
, are stored for the subsequent adaptation phase. The target dataset  $\Dcal_t = \{\xbm_i^t,\zbm_i^t \}_{i=1}^K$ (where $K \ll N$) includes a small number of SANS detector images. 
The adaptation involves fine-tuning the initial denoiser using the target dataset $\Dcal_t$ with the MSE loss
\begin{equation}
\label{Eq:MSELoss2}
\Lcal(\Dhat_\sigma) = \E \left[ \|\xbm^t - \Dhat_{\sigma}(\zbm^t)\|_2^2 \right],
\end{equation}
starting from the initial parameters $\thetabm^s$ of the denoiser pre-trained on source data.

\section{Theoretical Analysis}
\noindent
In this section, we present the theoretical convergence of the iterates generated by PR-SANS with adapted priors. Our analysis will require several assumptions that act as sufficient conditions for our theoretical results.

\begin{assumption}\label{as:nondegen}
    The prior density $p_{\xbm}$ is non-degenerate over $\R^n$. 
\end{assumption}
If the support of the probability density $p_{\xbm}$ is confined to space with a lower dimension than $n$, it is degenerate over $\R^n$. We use the assumption of non-degenerative prior density to establish a link between the operator~\eqref{eq:mmsedenoiser} and the following regularizer 
\begin{equation}\label{eq:regden}
    h(\xbm) = -\tau \sigma^2 \log p_{\zbm}(\xbm), \quad \xbm \in \R^n,     
\end{equation}
where $\tau$ is the regularization parameter, $p_\zbm$ is the density of observation~\eqref{eq:noisemodel}, and $\sigma^2$ is the AWGN level used for training $\Dsf_\sigma$ (the derivation is in Lemma~\ref{Lem:regularizerDenRel}).
The link between MMSE operators and and regularization beyond non-degenerate priors can be found in~\cite{Gribonval.Machart2013}. It can be proven that function $h$ is infinitely continuously differentiable~\cite{Gribonval2011, Gribonval.Machart2013}.
\begin{assumption}
\label{As:InexactDistance}
The adapted restoration operator $\Dhat_{\sigma}$ satisfies
\begin{equation*}
\|\Dhat_{\sigma}(\xbm^k)-\Dsf_{\sigma}(\xbm^k)\|_2 \leq \varepsilon_k, \quad k = 0,1, 2, 3, \ldots
\end{equation*}
where $\Dsf_{\sigma}$ is given in~\eqref{eq:mmsedenoiser}.
\end{assumption}
This assumption serves as a bounding constraint on the discrepancy between the domain adapted $\Dsfhat_\sigma$ and ideal MMSE restoration operator $\Dsf_\sigma$ for SANS data at each iteration of PR-SANS algorithm. 

Our analysis assumes that at every iteration, PR-SANS uses the adapted MMSE restoration operator, where the ideal MMSE restoration operator $\Dsf_\sigma$ is unavailable. We consider the case where at iteration $k$ of PR-SANS, the distance of the outputs of $\Dsf_{\sigma}$ and $\Dhat_{\sigma}$ is bounded by a constant $\varepsilon_k$.
\begin{assumption}
    \label{As:lipofG}
    The function $g$ is continuously differentiable. Additionally, $\nabla g$ and $\nabla h$ are Lipschitz continuous with constants $L>0$ and $M>0$, respectively. 
\end{assumption}
Lipschitz continuity is a standard assumption in the context of imaging inverse problems~\cite{Hurault.etal2022}.
\begin{assumption}
    \label{As:boundednessF}
    The data-fidelity term $g$ and the implicit regularizer $h$ are bounded from below. 
\end{assumption}
This assumption implies that there exists $f^\ast = f(\xbm^*) > -\infty$ such that $f(\xbm)\geq f^\ast$ for all $\xbm \in \R^n$.
\begin{theorem}
\label{Thm:maintheorem}
Run PR-SANS with the adapted MMSE restoration operator $\Dhat_\sigma$ for $t\geq 1$ iterations under Assumptions~\ref{as:nondegen}-\ref{As:boundednessF} using a step-size $\gamma >0$. Then, for each iteration $1\leq k\leq t$, we have 
\begin{equation*}
    \min_{1\leq k \leq t} \|\nabla f(\xbm^k)\|^2_2 \leq \frac{1}{t}\sum_{k=1}^t \|\nabla f(\xbm^k)\|^2_2 
    \leq \frac{B_1}{t}\left(f(\xbm^0) - f(\xbm^*)\right) + B_2 \varepsilonbar^2_t, 
\end{equation*}
where $B_1>0$ and $B_2>0$ are iteration independent constants, and $\varepsilonbar_t^2 \defn  (1/t)\sum_{k=1}^t\varepsilon^2_{k-1}$ . Additionally, if the sequence of error terms $\{\varepsilon_i\}_{i \geq 0}$ is square-summable, we have that $\|\nabla f(\xbm^k)\| \to 0$ as $k\to \infty$.
\end{theorem}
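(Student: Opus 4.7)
The plan is to follow the standard inexact proximal-gradient analysis, using the identity $\tau\Rsf = \nabla h$ supplied by Tweedie's formula under Assumption~\ref{as:nondegen} (via Lemma~\ref{Lem:regularizerDenRel}). Once this identity is in hand, the update~\eqref{eq:updaterule} becomes a forward--backward step on $f = g + h$ with an inexact gradient of $h$, since $\Rsfhat$ differs from $\Rsf$ by at most $\varepsilon_{k-1}$ at the $k$-th iterate (Assumption~\ref{As:InexactDistance}). The argument then proceeds in three stages: a one-step descent-with-error inequality, a bound on $\|\nabla f(\xbm^k)\|_2$ in terms of the step length $\|\xbm^k - \xbm^{k-1}\|_2$ and $\varepsilon_{k-1}$, and finally telescoping and averaging.

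For the descent inequality I would start from the minimization characterization of the prox, which upon expansion gives
\[
g(\xbm^k) + \tfrac{1}{2\gamma}\|\xbm^k - \xbm^{k-1}\|_2^2 + \tau\langle \xbm^k - \xbm^{k-1},\, \Rsfhat(\xbm^{k-1})\rangle \leq g(\xbm^{k-1}),
\]
and add the descent lemma for $h$ (legitimate because $\nabla h = \tau\Rsf$ is $M$-Lipschitz by Assumption~\ref{As:lipofG}). The inner-product terms combine into $\tau\langle \xbm^k - \xbm^{k-1}, (\Rsfhat - \Rsf)(\xbm^{k-1})\rangle$, which by Cauchy--Schwarz and Young's inequality is bounded by $\tfrac{\lambda}{2}\|\xbm^k - \xbm^{k-1}\|_2^2 + \tfrac{\tau^2}{2\lambda}\varepsilon_{k-1}^2$ for any $\lambda > 0$. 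Choosing $\gamma$ small enough (e.g.\ $\gamma < 1/M$) and $\lambda$ so that $C := \tfrac{1}{2\gamma} - \tfrac{M}{2} - \tfrac{\lambda}{2} > 0$ yields a descent inequality of the form $f(\xbm^k) \le f(\xbm^{k-1}) - C\|\xbm^k - \xbm^{k-1}\|_2^2 + C'\varepsilon_{k-1}^2$.

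The gradient bound comes from the prox optimality condition $\nabla g(\xbm^k) = -\tfrac{1}{\gamma}(\xbm^k - \xbm^{k-1}) - \tau\Rsfhat(\xbm^{k-1})$. Substituting into $\nabla f(\xbm^k) = \nabla g(\xbm^k) + \tau\Rsf(\xbm^k)$ and adding and subtracting $\tau\Rsf(\xbm^{k-1})$ gives
\[
\nabla f(\xbm^k) = -\tfrac{1}{\gamma}(\xbm^k - \xbm^{k-1}) + \tau\bigl(\Rsf(\xbm^k) - \Rsf(\xbm^{k-1})\bigr) + \tau\bigl(\Rsf - \Rsfhat\bigr)(\xbm^{k-1}).
\]
The middle term is controlled by $M\|\xbm^k - \xbm^{k-1}\|_2$ (Lipschitz of $\nabla h$) and the last by $\tau\varepsilon_{k-1}$, so squaring and using $\|a+b+c\|^2 \le 3(\|a\|^2+\|b\|^2+\|c\|^2)$ produces $\|\nabla f(\xbm^k)\|_2^2 \le C_1\|\xbm^k - \xbm^{k-1}\|_2^2 + C_2\varepsilon_{k-1}^2$. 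Telescoping the descent inequality from $k=1$ to $t$ and using $f \ge f^*$ from Assumption~\ref{As:boundednessF} controls $\sum_{k=1}^t \|\xbm^k - \xbm^{k-1}\|_2^2$, and averaging the squared-gradient bound then gives the claimed $B_1/t$ plus $B_2\varepsilonbar_t^2$ rate; the minimum over $k$ is immediate. Square-summability of $\{\varepsilon_i\}$ makes the cumulative telescoped bound finite, forcing $\|\xbm^k - \xbm^{k-1}\|_2 \to 0$, and combined with $\varepsilon_{k-1} \to 0$ the gradient bound yields $\|\nabla f(\xbm^k)\|_2 \to 0$.

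The main delicacy is the mismatch between the implicit step on $g$ (the prox is taken at $\xbm^k$) and the explicit step on $h$ (its gradient surrogate is evaluated at $\xbm^{k-1}$). This is why the gradient bound must be written at $\xbm^k$ and absorbs a $\tau(\Rsf(\xbm^k) - \Rsf(\xbm^{k-1}))$ correction through the Lipschitz constant $M$, rather than yielding a bound on $\|\nabla f(\xbm^{k-1})\|_2$ directly. Balancing $\gamma$ against the Young parameter $\lambda$ so that $C$ stays strictly positive while keeping the final constants $B_1,B_2$ clean is the main bookkeeping burden, but no genuinely nonstandard inequality is needed beyond Tweedie's identity and Assumptions~\ref{as:nondegen}--\ref{As:boundednessF}.
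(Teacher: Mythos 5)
Your proposal is correct, and it reaches the same three-stage skeleton as the paper (one-step descent with error, a per-iterate gradient bound in terms of $\|\xbm^k - \xbm^{k-1}\|_2$ and $\varepsilon_{k-1}$, then telescoping and averaging against Assumption~\ref{As:boundednessF}), but the internal mechanics are genuinely different. The paper introduces the auxiliary \emph{ideal} iterate $\xbmbar^k = \prox_{\gamma g}(\xbm^{k-1} - \gamma\tau\Rsf(\xbm^{k-1}))$ and proves Lemma~\ref{Lem:defineErrorTerm}, i.e.\ $\|\xbm^k - \xbmbar^k\|_2 \leq \alpha\varepsilon_{k-1}$ with $\alpha = \gamma\tau/(1-\gamma L)$; this iterate-comparison bound is then threaded through both the descent lemma (via smoothness of the prox objective $\phi$ minimized by $\xbmbar^k$) and the gradient decomposition (via the optimality condition written at $\xbmbar^k$). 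You never form $\xbmbar^k$: in your descent step the inexactness enters as the cross term $\tau\langle \xbm^k - \xbm^{k-1}, (\Rsf - \Rsfhat)(\xbm^{k-1})\rangle$, killed by Cauchy--Schwarz and Young, and in your gradient bound it enters as $\tau\|(\Rsf - \Rsfhat)(\xbm^{k-1})\|_2 \leq \tau\varepsilon_{k-1}$ through the optimality condition at the \emph{actual} iterate $\xbm^k$. This buys you two things the paper's route does not have: you never invoke the $L$-Lipschitz continuity of $\nabla g$, so the restriction $\gamma \leq 1/L$ and the $1/(1-\gamma L)$ blow-up in $\alpha$ disappear (only $\gamma < 1/M$ with slack for the Young parameter is needed); and your closing argument for the square-summable case --- $\sum_k \|\xbm^k - \xbm^{k-1}\|_2^2 < \infty$ forces $\|\xbm^k - \xbm^{k-1}\|_2 \to 0$, which together with $\varepsilon_{k-1} \to 0$ drives the \emph{pointwise} bound on $\|\nabla f(\xbm^k)\|_2$ to zero --- is tighter than the paper's Remark~1, which only argues through $\varepsilonbar_t^2 \to 0$ in the averaged bound and hence directly controls only the minimum over $k$, not the full sequence as claimed. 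What the paper's construction buys in exchange is the intermediate quantity $\|\xbm^k - \xbmbar^k\|_2$, which has independent interpretive value as the per-iteration drift of the inexact algorithm from the ideal one, at the cost of extra constants and the stronger step-size condition.
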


The expressions for constant $B_1$ and $B_2$ are given in the proof. Theorem 1 shows that the iterates generated by PR-SANS with adapted priors satisfy $\nabla f(\xbm^k) \to \bm{0}$ as $k\to \infty$ when the sequence of error term $\varepsilon_k$ is square-summable. The theorem shows that if the sequence of error is square-summable, PR-SANS asymptotically achieves a stationary point of $f$. Additionally, if the sequence of error is \emph{not} square-summable, one can still precisely characterize the error due to the discrepancy between adapted and ideal MMSE denoisers.   

\section{Experimental Results}

We perform numerical validation of PR-SANS for restoring experimentally collected SANS detector images. Our experiments are designed to evaluate the effectiveness of the PR-SANS algorithm by comparing its performance against baselines, such as the total variation (TV) denoising algorithm and end-to-end CNNs. Furthermore, we highlight how the complexity of the sample in the domain adaptation of deep priors influences the performance of PR-SANS.

\begin{wrapfigure}{r}{9cm}
\includegraphics[width=0.47\textwidth]{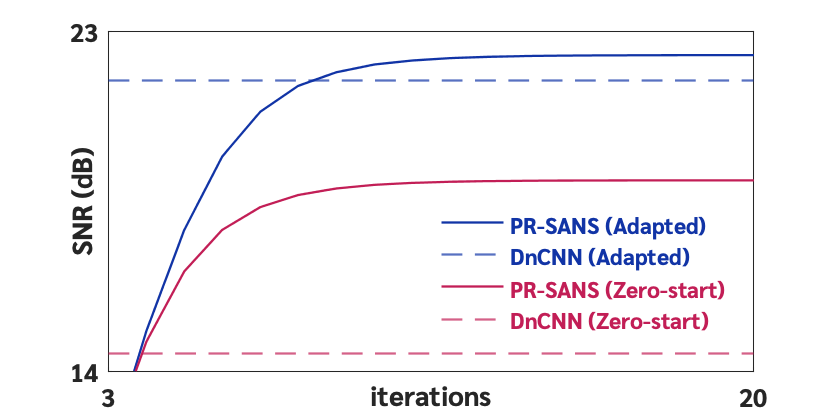}
    \caption{{Empirical evaluation of PR-SANS convergence to the true solution $\xbmast$ with adapted and zero-start DL priors and end-to-end models (DnCNN). The average SNR is plotted against iterations of the PR-SANS algorithm. Note that the proposed algorithm outperforms the DL model. Also, note the performance enhancement due to using adapted models instead of zero-start ones. }}
    \label{fig:psnr}
\end{wrapfigure}

We compare PR-SANS performance with several established  methods including end-to-end CNNs and TV denoising algorithm, which represents a non-learning-based approach to noise reduction~\cite{Beck.Teboulle2009a}. DnCNN~\cite{Zhang.etal2017}, a specific type of CNN, serves dual functions in our analysis: (a) it acts as an end-to-end network that directly maps measurements to detector images, and (b) it functions as a deep image prior in the PR-SANS algorithm. DnCNN models that are trained with a limited amount of data are labeled as "zero-start," while those trained with domain adaptation strategy are referred to as "adapted". Additionally, to investigate the effect of sample complexity on the domain adaptation of deep image priors, we evaluate the performance of PR-SANS with multiple adapted priors that differ in the quantity of target domain samples used for adaptation. Our evaluation metrics include SNR and root-square-mean-error (RMSE) for the recovery of 2D detector images, alongside normalized-mean-square-error (NMSE) and mean absolute error (MAE) for performance analysis of 1D intensity plots. We used a pre-trained DnCNN model, initially trained for reducing AWGN with a noise level parameter of $\sigma = 5$ on 400 natural grayscale images~\cite{Martin.etal2001}. For the training of both adapted and zero-start priors, 100 pairs of low- and high-SNR SANS detector images were used, with an extra set of 10 SANS detector image pairs reserved for validation and 10 pairs reserved for testing. The models were trained for 300 epoch and the model with the best performance on validation set was chosen. The number of iteration for PR-SANS was set to 20. The penalty parameter is set to $\gamma = 0.7$ for all the experiments. The regularization parameter $\tau$ was optimized for the best performance. The data collection procedure for SANS experiments is detailed in section~\ref{app:datacollection}.
\begin{figure*}[!t]
    \centering
    \includegraphics[width=1\textwidth]{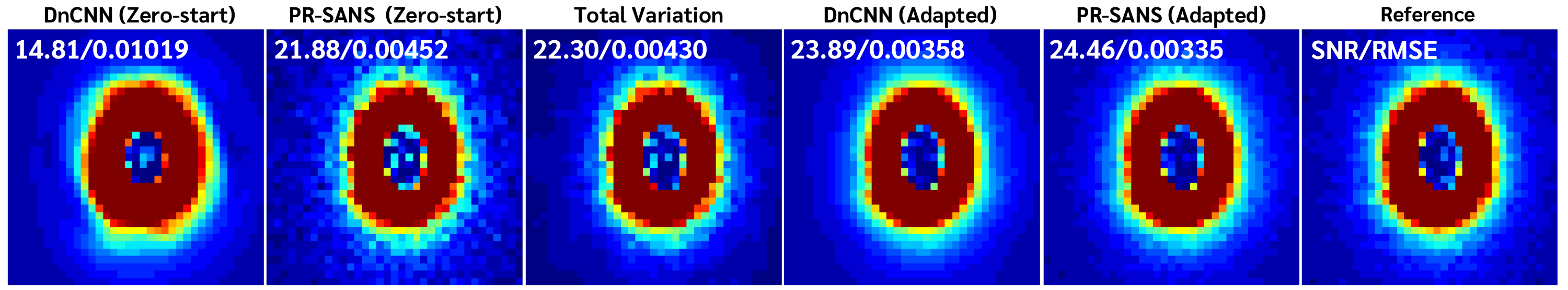}
    \caption{Visual evaluation of several methods on restoring high-SNR SANS detector images from low SNR data. Zero-start refers to the models that are trained with limited experimental data and adapted refers to models with domain adaptation. Note that generally, adapted models have better performance. Also note the improvement acquired by using the proposed restoration algorithm in comparison with just using DL models. }
    \label{fig:recon2d}
\end{figure*}
\begin{figure*}[!t]
    \centering    \includegraphics[width=1\textwidth]{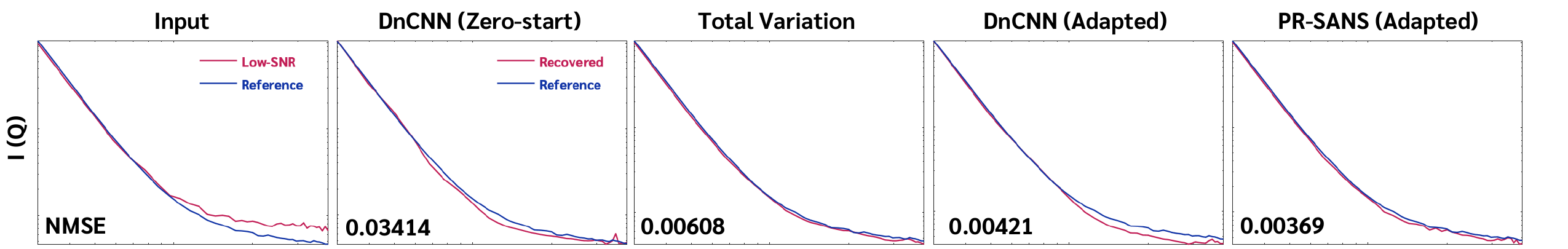}
    \caption{Visual evaluation of several methods on recovering 1D intensity $\bf{I (Q)}$ plot against scattering vector $\bf{Q(\AA)}$  from low SNR SANS data. Note the competitive performance of the proposed method against other approaches.}
    \label{fig:recon1d}
\end{figure*}
\subsection{SANS detector image restoration}\label{ssec:firstsetexp}
In this section, we compare the performance of PR-SANS using adapted and zero-start priors against zero-start DnCNN, adapted DnCNN, and TV denoising algorithm. 
Figure~\ref{fig:recon2d} illustrates the visual comparison of various methods,  evaluating their performance in terms of SNR and RMSE. Note the superior performance of PR-SANS when using an adapted DnCNN prior, as opposed to the adapted DnCNN functioning as an end-to-end mapping. Additionally, note the performance improvement due to domain adaptation in PR-SANS algorithm and DnCNN. Figure~\ref{fig:recon1d} illustrates visual comparison of 1D intensity $I (Q)$ plot against scattering vector $Q(\AA)$ (Section~\ref{app:datacollection} provides details on acquiring  $I (Q)$ from 2D detector images). Note that the proposed method PR-SANS with adapted prior achieves the best performance. 
Table \ref{tab:methodscomparison} presents the quantitative evaluation of 2D detector images and 1D intensity plots restoration performance, showing that PR-SANS outperforms other methods.
Figure~\ref{fig:psnr} illustrates the convergence of the PR-SANS with adapted and zero-start priors.
\begin{figure*}[!t]
    \centering 
    \includegraphics[width=0.95\textwidth]{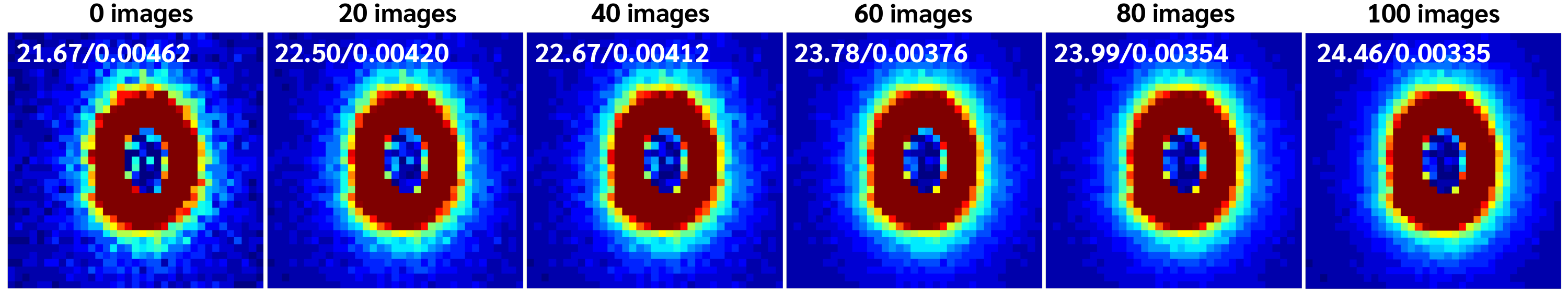}
    \caption{Visual evaluation of sample complexity used in domain adaptation of DL prior for PR-SANS algorithm. The performance is reported in terms of SNR (dB) and RMSE. Note the improvement of performance when the adaptation samples from the SANS dataset increases.}
    \label{fig:reconadap}
\end{figure*}
\begin{table}[!t]
 \caption{Numerical evaluation of 2D detector image and 1D intensity plot restoration. Table highlights the {\color{lightred}{best}} and {\color{lightblue}{second best}} results. Note the suboptimal performance of DnCNN (Zero-start) due to the limited amount of training data, which highlights the importance of domain adaptation.}
\centering\setlength\tabcolsep{4pt}\renewcommand{\arraystretch}{1}
    \begin{tabular}{ l l  l l l l l}
    \midrule
          \textbf{Method} & \multicolumn{2}{c}{\textbf{2D}} && \multicolumn{2}{c}{\textbf{1D}}\\
          \cline{2-3}\cline{5-7}
          &\textbf{SNR} $\uparrow$&\textbf{RMSE} $\downarrow$&&\textbf{NMSE} $\downarrow$&\textbf{MAE} $\downarrow$\\
          \midrule
          \textbf{Low-SNR  data}&$18.44$  &$0.00680$ && $0.03683$&$ 0.00316$\\
          \hdashline
          \textbf{DnCNN (Zero-start)}&$14.48$  &$0.01058$ && $0.03683$&$0.00230$\\ 
          \textbf{PR-SANS (Zero-start)}& $19.05$ &$0.00634$ &&$0.01358$ &$0.00204$\\
          \textbf{Total Variation}& $19.53$ & $0.00600$&& $0.01219$&$0.00258$\\
          \textbf{DnCNN (Adapted)} & \color{lightblue}$\bf{21.69}$ &\color{lightblue}$\bf{0.00468}$ &&\color{lightblue}$\bf{0.00748}$ &\color{lightblue}$\bf{0.00162}$\\
          \textbf{PR-SANS (Adapted)}& \color{lightred}$\bf{22.36}$ &\color{lightred}$\bf{0.00432}$ &&\color{lightred}$\bf{0.00635}$ &\color{lightred}$\bf{0.00144}$\\
          \midrule
    \end{tabular}
    \label{tab:methodscomparison}
\end{table}

\subsection{Sample 
Complexity of Domain Adaptation}\label{ssec:domainadaptexp}
In this section, we investigate the effect of sample complexity used for prior adaptation on PR-SANS performance, particularly focusing on how the number of detector images selected from the target domain (SANS) for prior adaptation influences outcomes. 
Same pre-trained model is adapted using different number of randomly chosen data pair from SANS dataset.

Figure~\ref{fig:reconadap} illustrates the visual comparison of  2D detector images restoration using PR-SANS with different adapted priors . The performance is reported using SNR (dB) and RMSE. In this context, ``0 images'' refers to PR-SANS with the pre-trained base DnCNN without any domain adaptation. Note the progressive improvement in the PR-SANS algorithm's performance as the sample complexity used in domain adaptation increases. Numerical results for the performance of adapted models in recovering 2D detector image and 1D intensity plot, averaged for all test samples, are reported in Table~\ref{tab:adaptation}.

Figure~\ref{fig:psnrofSamples} presents the empirical results comparing the performance PR-SANS with adapted and zero-start priors. The performance in terms of SNR(dB) is plotted against the number of samples used for adaptation and training of zero-start priors.

Theorem~\ref{Thm:maintheorem} establishes that PR-SANS approximates the stationary points of objective function with an error margin, where the error depends on the difference between the ideal $\Dsf_\sigma$ and adapted $\Dsfhat_\sigma$ priors (see Assumption~\ref{As:InexactDistance}). 
Obtaining the ideal prior for SANS images is not feasible in practice.
Nonetheless, it can be  shown that increasing the number of images used for domain adaptation can gradually bring the adapted $\Dsfhat_\sigma$ closer to the ideal $\Dsf_\sigma$ prior, decreasing the restoration error margin. 
Figure~\ref{fig:error} illustrates the averaged error of 2D detector image restoration against PR-SANS iterations for various priors. Note the reduction in error as the number of images used for adaptation grows, a result that is consistent with the result in Theorem~\ref{Thm:maintheorem}.

\begin{table}[!t]
 \caption{Numerical evaluation of sample complexity for domain adaptation of DL prior in PR-SANS algorithm. Table highlights the {\color{lightred} best} results. Note how performance relates to the number of samples used for adaptation.}
\centering\setlength\tabcolsep{4pt}\renewcommand{\arraystretch}{1}
    \begin{tabular}{ l  l l l l l l}
    \midrule
          \textbf{Method} & \multicolumn{2}{c}{\textbf{2D}} && \multicolumn{2}{c}{\textbf{1D}}\\
          \cline{2-3}\cline{5-7}
          &\textbf{SNR} $\uparrow$&\textbf{RMSE} $\downarrow$&&\textbf{NMSE} $\downarrow$&\textbf{MAE} $\downarrow$\\
          \midrule
          \textbf{Adapted (0 imgs)}&$18.46$  &$0.00679$ && $0.01566$&$0.00316$\\ 
          \textbf{Adapted (20 imgs)}& $20.17$ &$0.00556$ &&$0.01047$ &$0.00228$\\
          \textbf{Adapted (40 imgs)}&$20.25$ & $0.00553$&& $0.01038$&$0.00215$\\
          \textbf{Adapted (60 imgs)} & $21.26$ &$0.00492$ &&$ 0.00827$ &$0.00193$\\
          \textbf{Adapted (80 imgs)}&\color{lightred}$\bf{21.67}$ &\color{lightred}$\bf{0.00469}$ &&\color{lightred}$\bf{0.00749}$ &\color{lightred}$\bf{0.00178}$\\
          \midrule
    \end{tabular}
    \label{tab:adaptation}
\end{table}

\begin{figure}[!t] 
	\centering 
	\begin{minipage}[t]{8cm} 
		\centering 
		\includegraphics[width=0.98\textwidth]{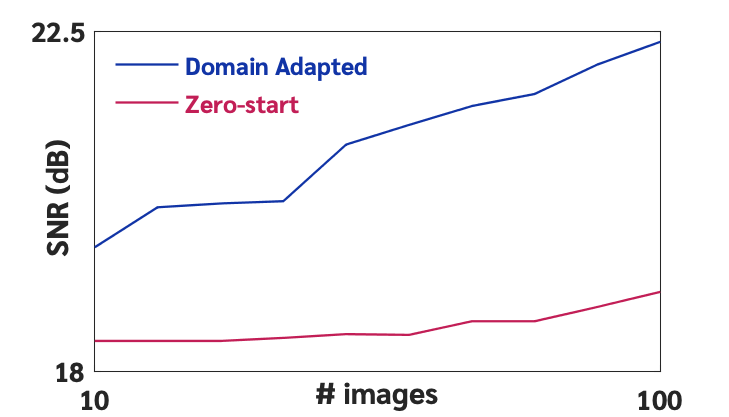} 
		\caption{Empirical evaluation of performance of PR-SANS with adapted and zero-start DL priors. Average SNR is plotted against number of samples used for adaptation and training of zero-start priors. Note the performance advantage of the domain adaptation  over zero-start DL models.}\label{fig:error}
	\end{minipage} 
	\hspace{0.5cm} 
	\begin{minipage}[t]{8cm} 
		\centering 
    \includegraphics[width=\textwidth]{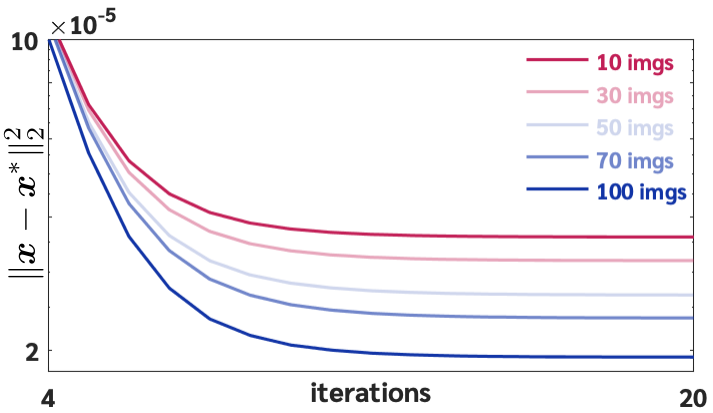}
    \caption{Empirical evaluation of PR-SANS convergence to the true solutions $\xbmast$ with multiple adapted priors. Note the error of restoration decrease with increase of images used for adaptation.}\label{fig:psnrofSamples}
	\end{minipage} 
\end{figure}

\section{Conclusion}
In this paper, we introduce PR-SANS, a PnP algorithm leveraging deep image priors  to recover high-SNR SANS data.
PR-SANS enhances the acquisition process by restoring high-SNR detector images from  low-SNR SANS data with shorter acquisition time. Our findings reveal that domain adaptation successfully addresses the challenge of limited training datasets, with a combination of 400 natural images and 100 SANS datasets proving adequate for obtaining adapted priors that successfully restore SANS detector images through PR-SANS. Comparative visual, empirical, and numerical analyses indicate that PR-SANS surpasses both traditional restoration algorithms, such as Total Variation (TV), and Convolutional Neural Networks (CNNs) in the restoration of SANS data.

\section*{Acknowledgment}

Neutron scattering research conducted using the Bio-SANS instrument, a DOE Office of Science, Office of Biological and Environmental Research resource (FWP ERKP291), used resources at the High-Flux Isotope Reactor, a DOE Office of Science, Scientific User Facility operated by the Oak Ridge National Laboratory.

\section{Appendix}\label{sec:app}
\subsection{Data collection procedure}\label{app:datacollection}
SANS measurements were performed at the CG3-Bio-SANS instrument at the High Flux Isotope Reactor (HFIR), at Oak Ridge National Laboratory (ORNL). The data was collected in a single instrument configuration with the main detector $15.5 ~m$  away from the sample; wing detector at $1.4^{\circ}$ rotational angle from the beam to cover a scattering vector ($Q$) range $0.003-0.8~\scriptstyle{\AA^{-1}}$ at a wavelength of  $6 \scriptstyle{\AA^{-1}} (\Delta \lambda / \lambda = 13.2\%)$; where $Q = (4\pi/\lambda) \sin\theta$,  $\theta$ is the scattering angle, and $\lambda$  is the neutron wavelength. The data were originally recorded for $1~hr$ to obtain a high-resolution 2D image. The recorded 2D scattering patterns were azimuthally averaged to obtain the high-resolution 1D SANS scattering curves $I(Q)$ versus scattering wave vector $Q$ using the ORNL developed software drtsans3~\cite{HELLER2022}. The high-resolution 2D image was sliced as a function of time  $5 ~min$ and azimuthally average to obtain low-resolution 1D data. The reduction process applied corrections such as dark current, transmission, and solid angle as well as subtracted parasitic scattering of the empty cell, and the final output data was in absolute scale. Details regarding the data collection and sample preparation processes are outlined in the appendices.

\noindent
\textbf{Materials.} Hybrid poplar (NM6) wood chips, D\textsubscript{2}O, and $\gamma$-Valerolactone-d\textsubscript{6}. Debarked Poplar wood chips were provided by the Great Lakes Bioenergy Research Center (GLBRC). The poplar was milled and sieved through a $5~mm$ screen before the lignin extraction. D\textsubscript{2}O was purchased from Cambridge Isotopes. d\textsubscript{6}-$\gamma$-Valerolactone (d\textsubscript{6}-GVL) was self-synthesized from levulinic acid (LA) $98\%$ (Sigma) following the reported method in~\cite{Yuan2021}.

\noindent
\textbf{Procedure for the $\mathbf{\gamma}$-Valerolactone-assisted lignin (GVL-lignin) isolation.} GVL-lignin was obtained by treatment of hybrid poplar wood chips using a published method~\cite{cheng2023}. GVL-lignin was extracted at $100$ and $120^{\circ}C$, referred to as $L100$ and $L120$. 

\noindent
\textbf{Procedure for the synthesis of d\textsubscript{6}-$\mathbf{\gamma}$-Valerolactone (d\textsubscript{6}-GVL).} [3,3,4,5,5,5-d]GVL (d\textsubscript{6}-GVL) was synthesized according to the reported methodology in~\cite{Yuan2021}. Briefly, $16.4~g$ of LA and $0.5~mL$ of H\textsubscript{2}SO\textsubscript{4} ($99.9\%$ from Fisher Scientific) were dissolved in $200~mL$ of D\textsubscript{2}O ($99.9$ atom $\%$ D from Aldrich), the mixture was heated at $90^\circ C$ and stirred for $72~hrs$ using a $450~mL$ Parr batch reactor. After $72~hrs$, the D\textsubscript{2}O was removed by rotary evaporation, $200~mL$ of new D\textsubscript{2}O was added to the reactor, and the deuteration continued for another $72~hrs$. This procedure was repeated twice for a total reaction time of $9$ days. The product obtained after the D\textsubscript{2}O removal was labeled as d\textsubscript{6}-LA and was analyzed by \textsuperscript{1}H and \textsuperscript{2}H-NMR to determine the deuterium incorporation. The d\textsubscript{6}-LA obtained was used for the synthesis of d\textsubscript{6}-GVL. $15.8~g$ of d\textsubscript{6}-LA and $0.13~g$ of $5$ wt$\%$ Ru/C were added in $30~g$ of D\textsubscript{2}O and heated at $90^\circ C$ under 40 bar of $25~mol\%$ of D\textsubscript{2}/N\textsubscript{2} (Airgas). The reaction mixture was stirred at $600~rpm$ for $3~hrs$ in a $75~mL$ Parr batch reactor and then quenched using an ice bath. 

\noindent
\textbf{SANS experiments.} A customized reaction cell was used for the experiments. $0.5$ and $10$ wt$\%$ lignin (L100/L120) were prepared in GVL-d\textsubscript{6}/D\textsubscript{2}O mixture ($90:10$ wt$\%$). The reaction cell was loaded with the pre-made lignin solution and placed in the beamline and heated to the target temperate ($25,~50,~80$, and back to $25^\circ C$) at a heating rate of $5^\circ C/min$. At each temperature stage, SANS data was collected for $1~hr$. 
\subsection{Proof of Theorem 1}
\begin{suptheorem}
\label{Thm:maintheoremapx}
Run PR-SANS with the adapted MMSE restoration operator $\Dhat_\sigma$ for $t\geq 1$ iterations under Assumptions~\ref{as:nondegen}-\ref{As:boundednessF} using a step-size $0 < \gamma \leq\min(1/M, 1/L)$. Then, for each iteration $1\leq k\leq t$, we have 
\begin{equation*}
    \min_{1\leq k \leq t} \|\nabla f(\xbm^k)\|^2_2 \leq \frac{1}{t}\sum_{k=1}^t \|\nabla f(\xbm^k)\|^2_2 
    \leq \frac{B_1}{t}\left(f(\xbm^0) - f(\xbm^*)\right) + B_2 \varepsilonbar_t^2, 
\end{equation*}
where $B_1 \defn 4(1+\gamma M)^2/(\gamma(1-\gamma M))$, $B_2 \defn \lambda \alpha^2 B_1/2 + 2\alpha^2(1+\gamma L)^2/\gamma^2$, and $\varepsilonbar_t^2 \defn  (1/t)\sum_{k=1}^t\varepsilon^2_{k-1}$ are iteration independent constants. Additionally, if the sequence of error terms $\{\varepsilon_i\}_{i \geq 0}$ is square-summable, we have that $\|\nabla f(\xbm^k)\| \to 0$ as $k\to \infty$.
\end{suptheorem}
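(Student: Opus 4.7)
The plan is to recast PR-SANS as an inexact forward-backward splitting and then run the standard nonconvex proximal-gradient analysis. The first step is to use the non-degeneracy assumption together with the Tweedie-type identity implicit in Lemma~\ref{Lem:regularizerDenRel} to identify $\nabla h(\xbm)=\tau\Rsf(\xbm)$ for the regularizer in \eqref{eq:regden}. With this in hand, the update~\eqref{eq:updaterule} becomes
\begin{equation*}
\xbm^k \;=\; \prox_{\gamma g}\!\left(\xbm^{k-1} - \gamma \nabla h(\xbm^{k-1}) + \gamma\, \ebm^{k-1}\right),\qquad \ebm^{k-1} \defn \tau\bigl(\Rsf(\xbm^{k-1})-\Rsfhat(\xbm^{k-1})\bigr),
\end{equation*}
so Assumption~\ref{As:InexactDistance} gives $\|\ebm^{k-1}\|_2\le \tau\varepsilon_{k-1}$. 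Everything else is analysis of inexact proximal gradient on $f=g+h$ under Assumptions~\ref{As:lipofG}--\ref{As:boundednessF}.

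Next I would convert the fixed-point residual into a bound on $\|\nabla f(\xbm^k)\|_2$. Writing the first-order optimality condition for the prox yields
\begin{equation*}
\nabla g(\xbm^k) \;=\; \tfrac{1}{\gamma}(\xbm^{k-1}-\xbm^k) \;-\; \nabla h(\xbm^{k-1}) \;+\; \ebm^{k-1},
\end{equation*}
and adding $\nabla h(\xbm^k)$ to both sides gives $\nabla f(\xbm^k) = \tfrac{1}{\gamma}(\xbm^{k-1}-\xbm^k) + [\nabla h(\xbm^k)-\nabla h(\xbm^{k-1})] + \ebm^{k-1}$. Applying $M$-Lipschitzness of $\nabla h$, the triangle inequality, and $(a+b)^2\le(1+\lambda)a^2+(1+\lambda^{-1})b^2$ produces the first ingredient
\begin{equation*}
\|\nabla f(\xbm^k)\|_2^2 \;\le\; C_1\,\|\xbm^k-\xbm^{k-1}\|_2^2 \;+\; C_2\,\varepsilon_{k-1}^2,
\end{equation*}
with $C_1,C_2$ proportional to $(1+\gamma M)^2/\gamma^2$ and $\tau^2$ respectively.

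The second ingredient is a descent inequality. Using the fact that $\xbm^k$ minimizes $g(\cdot)+\tfrac{1}{2\gamma}\|\cdot - (\xbm^{k-1}-\gamma\tau\Rsfhat(\xbm^{k-1}))\|_2^2$ and evaluating at $\ybm=\xbm^{k-1}$ gives
\begin{equation*}
g(\xbm^k) + \tfrac{1}{2\gamma}\|\xbm^k-\xbm^{k-1}\|_2^2 + \langle \tau\Rsfhat(\xbm^{k-1}),\,\xbm^k-\xbm^{k-1}\rangle \;\le\; g(\xbm^{k-1}).
\end{equation*}
Combining this with the descent lemma applied to $h$ (which uses $\gamma\le 1/M$) and splitting $\tau\Rsfhat = \nabla h + \tau(\Rsfhat-\Rsf)$ leaves the error term $\langle \ebm^{k-1},\xbm^k-\xbm^{k-1}\rangle$, which I would absorb using Young's inequality $|\langle u,v\rangle|\le\tfrac{\lambda}{2}\|u\|_2^2+\tfrac{1}{2\lambda}\|v\|_2^2$. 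A careful choice of $\lambda<(1-\gamma M)/\gamma$ preserves a strictly negative coefficient on $\|\xbm^k-\xbm^{k-1}\|_2^2$ and produces
\begin{equation*}
f(\xbm^k) \;\le\; f(\xbm^{k-1}) - c_1\,\|\xbm^k-\xbm^{k-1}\|_2^2 + c_2\,\varepsilon_{k-1}^2 .
\end{equation*}
Telescoping from $k=1$ to $t$ and using $f(\xbm^t)\ge f^\ast$ (Assumption~\ref{As:boundednessF}) bounds $\sum_{k=1}^t\|\xbm^k-\xbm^{k-1}\|_2^2$ by $c_1^{-1}(f(\xbm^0)-f^\ast)+c_1^{-1}c_2\sum_{k=0}^{t-1}\varepsilon_k^2$.

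To finish, I would average the pointwise bound on $\|\nabla f(\xbm^k)\|_2^2$ over $k=1,\dots,t$, substitute the telescoped estimate, collect all prefactors into the advertised constants $B_1$ and $B_2$, and use $\min_k\|\nabla f(\xbm^k)\|_2^2\le \tfrac{1}{t}\sum_k\|\nabla f(\xbm^k)\|_2^2$ to arrive at the displayed inequality. For the asymptotic claim, square-summability of $\{\varepsilon_i\}$ makes the right-hand side of the telescoped descent bound \emph{absolutely} summable, so $\sum_{k\ge 1}\|\xbm^k-\xbm^{k-1}\|_2^2<\infty$, and plugging into the pointwise bound yields $\sum_{k\ge 1}\|\nabla f(\xbm^k)\|_2^2<\infty$, forcing $\|\nabla f(\xbm^k)\|_2\to 0$. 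The main obstacle I anticipate is bookkeeping: choosing the Young's-inequality parameter $\lambda$ so that the absorbed $\|\xbm^k-\xbm^{k-1}\|_2^2$ term leaves the cleanest possible coefficients and so that the two ingredients compose into precisely the $B_1,B_2$ stated in the theorem; everything else is routine but tedious algebra.
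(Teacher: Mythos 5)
Your proposal is correct in substance, but it takes a genuinely different route from the paper's proof. The paper runs the \emph{ideal} iterate $\xbmbar^k = \prox_{\gamma g}(\xbm^{k-1} - \gamma\tau\Rsf(\xbm^{k-1}))$ alongside the computed one, proves the auxiliary bound $\|\xbm^k - \xbmbar^k\|_2 \le \alpha\varepsilon_{k-1}$ (Lemma~\ref{Lem:defineErrorTerm}, which is exactly where $L$-Lipschitzness of $\nabla g$ and the restriction $\gamma \le 1/L$ are consumed), and threads that bound through both the gradient decomposition (written around the optimality condition of the \emph{exact} step) and the descent lemma (via $(1/\gamma+L)$-smoothness of the prox objective at its minimizer $\xbmbar^k$, so the error enters purely additively as $\lambda\alpha^2\varepsilon_{k-1}^2/2$). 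You never introduce $\xbmbar^k$: you view the scheme as inexact proximal gradient with perturbation $\ebm^{k-1}$, $\|\ebm^{k-1}\|_2 \le \tau\varepsilon_{k-1}$, use the optimality condition of the \emph{computed} step for the gradient bound, and absorb the cross term $\langle \ebm^{k-1}, \xbm^k - \xbm^{k-1}\rangle$ in the descent step by Young's inequality. Your route buys simplicity and generality: it needs neither Lemma~\ref{Lem:defineErrorTerm} nor Lipschitz continuity of $\nabla g$ nor $\gamma\le 1/L$ (only differentiability of $g$ so the prox optimality condition holds), and your summability argument ($\sum_k \|\nabla f(\xbm^k)\|_2^2 < \infty$) actually establishes full convergence $\|\nabla f(\xbm^k)\|_2 \to 0$, which is more rigorous than the paper's Remark~1, whose averaged bound only directly yields $\min_{1\le k\le t}\|\nabla f(\xbm^k)\|_2 \to 0$. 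The price is the constants: since Young's inequality necessarily eats into the coefficient $(1-\gamma M)/(2\gamma)$ of $\|\xbm^k - \xbm^{k-1}\|_2^2$, your $B_1$ is strictly larger than the stated $4(1+\gamma M)^2/(\gamma(1-\gamma M))$ for every admissible choice of the Young's parameter (e.g., twice that value for the natural halving choice), so you prove the main-text version of the theorem (iteration-independent $B_1, B_2$) but not the exact constants displayed in this appendix statement; the paper's comparison-iterate device avoids this loss because its error term never multiplies the step-difference term.
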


\begin{proof}\label{Theorm1}
The update rule for adapted operator $ \Dsfhat_\sigma$ and ideal operator $\Dsf_\sigma$ are $\xbm^k = \prox_{\gamma g} \left( \xbm^{k-1} - \gamma \tau \Rsfhat(\xbm^{k-1})\right)$ and $\xbmbar^k = \prox_{\gamma g} \left( \xbm^{k-1} - \gamma \tau \Rsf(\xbm^{k-1})\right)$, respectively. 
From the optimality conditions for the update rule of ideal operator, we have 
\begin{equation*}
    \frac{1}{\gamma}(\xbmbar^k - \xbm^{k-1}) +\nabla h(\xbm^{k-1}) + \nabla g(\xbmbar^k) = 0.
\end{equation*}
By using this equation, the gradient of the objective function $f$ in~\eqref{eq:inverseproblem} can be written as 
\begin{align}\label{eq:gradobj}
   \nonumber\|\nabla f(\xbm^k)\|_2 &= \|\nabla g(\xbm^k) + \nabla h(\xbm^k)\|_2
     = \|\nabla g(\xbm^k) - \nabla g(\xbmbar^k) + \nabla h(\xbm^k) - \nabla h(\xbm^{k-1})+  \frac{1}{\gamma}(\xbm^{k-1} - \xbm^k) +  \frac{1}{\gamma}(\xbm^{k} - \xbmbar^k) \|_2 \\
    \nonumber &\leq \left(L+\frac{1}{\gamma}\right)\alpha \varepsilon_{k-1} + \left(M+\frac{1}{\gamma}\right) \|\xbm^k - \xbm^{k-1}\|_2, 
\end{align}
where we used Lemma~\ref{Lem:defineErrorTerm}, $L$-Lipschitz continuity of $\nabla g$, and $M$-Lipschitz continuity of $\nabla h$ from Assumptions~\ref{As:lipofG}. By squaring both sides and using $(a+b)^2 \leq 2a^2+2b^2$, we have 
\begin{equation}
\label{eq:finalgrad}
    \nonumber\|\nabla f(\xbm^k)\|^2_2 \leq A_1\|\xbm^k - \xbm^{k-1}\|_2^2 + A_2\varepsilon^2_{k-1}, 
\end{equation}
where $A_1 \defn 2(M + 1/\gamma)^2$ and $A_2 \defn 2\alpha^2(L + 1/\gamma)^2 $. By using the result from Lemma~\ref{Lem:DecreasingF} and averaging both sides of the bound over $t\geq 1$, we get the desired result
\begin{equation*}
    \min_{1\leq k \leq t} \|\nabla f(\xbm^k)\|^2_2 \leq \frac{1}{t}\sum_{k=1}^t \|\nabla f(\xbm^k)\|^2_2 
    \leq \frac{B_1}{t}\left(f(\xbm^0) - f(\xbm^t)\right) + \frac{B_2}{t} \sum_{k=1}^t\varepsilon^2_{k-1}
    \leq \frac{B_1}{t}\left(f(\xbm^0) - f(\xbm^*)\right) + B_2 \varepsilonbar_t^2
\end{equation*}
where we used the fact that $f(\xbm^t) \geq f(\xbmast)$ from Assumption~\ref{As:boundednessF}, and $B_1 \defn 4(1+\gamma M)^2/(\gamma(1-\gamma M))$, $B_2 \defn \lambda \alpha^2 B_1/2 + 2\alpha^2(1+\gamma L)^2/\gamma^2$, and $\varepsilonbar_t^2 \defn  (1/t)\sum_{k=1}^t\varepsilon^2_{k-1}$.
\end{proof}
\noindent\textbf{Remark 1.} 
If the sequence of error terms $\{\varepsilon_i\}_{i \geq 0}$ is square-summable, we have $\varepsilonbar_t^2 \to 0$ as $t \to \infty$. Consequently, $\|\nabla f(\xbm^t)\| \to 0$ as $t \to \infty$.

\begin{lemma}
    \label{Lem:DecreasingF}
    Run PR-SANS with adapted restoration operator $\Dhat_\sigma$ for $k\geq 1$ iterations under Assumptions~\ref{as:nondegen}-\ref{As:boundednessF} using a step-size $0 < \gamma \leq\min(1/M, 1/L)$. Then we have 
\begin{equation*}
  f(\xbm^k)\leq f(\xbm^{k-1})  - \frac{1- \gamma M}{2\gamma}\|\xbm^k - \xbm^{k-1}\|_2^2 +  \frac{\lambda \alpha^2 \varepsilon^2_{k-1}}{2}. 
\end{equation*}
\end{lemma}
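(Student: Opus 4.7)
The plan is to adapt the standard proximal-gradient descent argument to the inexact setting where $\Dsfhat_\sigma$ replaces the ideal $\Dsf_\sigma$. The first move is to invoke Lemma~\ref{Lem:regularizerDenRel} (Tweedie's identity, available under the non-degenerate prior Assumption~\ref{as:nondegen}) to obtain $\nabla h(\xbm) = \tau \Rsf(\xbm)$, so that Assumption~\ref{As:lipofG} makes $\nabla h$ an $M$-Lipschitz gradient. Applying the standard descent lemma then produces
\begin{equation*}
h(\xbm^k) - h(\xbm^{k-1}) \leq \tau \langle \Rsf(\xbm^{k-1}), \xbm^k - \xbm^{k-1}\rangle + \frac{M}{2}\|\xbm^k - \xbm^{k-1}\|_2^2.
\end{equation*}

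For the data-fidelity term, I would read the first-order optimality condition of the proximal step $\xbm^k = \prox_{\gamma g}(\xbm^{k-1} - \gamma\tau\Rsfhat(\xbm^{k-1}))$, namely $\nabla g(\xbm^k) = -\frac{1}{\gamma}(\xbm^k - \xbm^{k-1}) - \tau \Rsfhat(\xbm^{k-1})$, and combine it with convexity of $g$ (which holds for the quadratic data-fidelity of PR-SANS) to get
\begin{equation*}
g(\xbm^k) - g(\xbm^{k-1}) \leq \langle \nabla g(\xbm^k), \xbm^k - \xbm^{k-1}\rangle = -\frac{1}{\gamma}\|\xbm^k - \xbm^{k-1}\|_2^2 - \tau \langle \Rsfhat(\xbm^{k-1}), \xbm^k - \xbm^{k-1}\rangle.
\end{equation*}
Adding the two inequalities collapses the inner-product contributions into a single cross term $\tau \langle (\Rsf - \Rsfhat)(\xbm^{k-1}), \xbm^k - \xbm^{k-1}\rangle$ and leaves the squared-norm coefficient at $-(1/\gamma - M/2)$.

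The remaining step is to absorb that cross term. By Cauchy-Schwarz and Assumption~\ref{As:InexactDistance}, using $\Rsf - \Rsfhat = \Dsfhat_\sigma - \Dsf_\sigma$ so that $\|(\Rsf - \Rsfhat)(\xbm^{k-1})\|_2 \leq \varepsilon_{k-1}$, the cross term is at most $\alpha \varepsilon_{k-1}\|\xbm^k - \xbm^{k-1}\|_2$ with $\alpha = \tau$. Applying Young's inequality with parameter $\lambda > 0$ splits it as $\frac{1}{2\lambda}\|\xbm^k - \xbm^{k-1}\|_2^2 + \frac{\lambda \alpha^2 \varepsilon_{k-1}^2}{2}$, and any choice $\lambda \geq \gamma$ ensures $\frac{1}{2\lambda} \leq \frac{1}{2\gamma}$, so that the total squared-norm coefficient is bounded by $-\frac{1}{\gamma} + \frac{M}{2} + \frac{1}{2\gamma} = -\frac{1-\gamma M}{2\gamma}$, matching the statement. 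The step-size requirement $\gamma \leq \min(1/M, 1/L)$ guarantees that $(1-\gamma M)/(2\gamma) \geq 0$ (through $\gamma \leq 1/M$) and that the proximal subproblem is well-posed (through $\gamma \leq 1/L$). The main obstacle I foresee is precisely this coefficient bookkeeping: convexity of $g$ combined with prox-optimality must deliver the full $-1/\gamma$ term, because the weaker $-1/(2\gamma)$ one obtains from only the minimizing property of the prox would leave no slack to absorb the $+1/(2\lambda)$ cost introduced by Young's inequality.
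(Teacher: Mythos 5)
Your argument has one genuine gap: the inequality $g(\xbm^k) - g(\xbm^{k-1}) \leq \langle \nabla g(\xbm^k), \xbm^k - \xbm^{k-1}\rangle$ requires $g$ to be convex, and convexity appears nowhere in Assumptions~\ref{as:nondegen}--\ref{As:boundednessF}, which are the only hypotheses of the lemma. Assumption~\ref{As:lipofG} gives only Lipschitz continuity of $\nabla g$; the lemma (and Theorem~\ref{Thm:maintheorem} built on it) is stated for possibly nonconvex $g$, and the paper's own proof never invokes convexity. You flag the soft spot yourself --- without convexity, the mere minimizing property of the prox yields only $-\tfrac{1}{2\gamma}$, leaving no slack to absorb the Young's-inequality cost --- and then patch it by appealing to the least-squares data-fidelity of PR-SANS, but that proves the lemma only in a special case, not as stated. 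The repair inside your own framework is to use the weak-convexity bound that Assumption~\ref{As:lipofG} does provide, namely $g(\xbm^{k-1}) \geq g(\xbm^k) + \langle \nabla g(\xbm^k), \xbm^{k-1} - \xbm^k\rangle - \tfrac{L}{2}\|\xbm^k - \xbm^{k-1}\|_2^2$; this adds $\tfrac{L}{2}\|\xbm^k - \xbm^{k-1}\|_2^2$ to your estimate, which Young's inequality can still absorb provided you take $\lambda \geq \gamma/(1-\gamma L)$ rather than $\lambda \geq \gamma$. That choice requires $\gamma L < 1$ strictly, but the paper is in the same position: its constant $\alpha = \gamma\tau/(1-\gamma L)$ from Lemma~\ref{Lem:defineErrorTerm} is finite only when $\gamma L < 1$.

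Apart from this, your route is genuinely different from the paper's, and in some respects tighter. The paper introduces the ideal iterate $\xbmbar^k = \prox_{\gamma g}(\xbm^{k-1} - \gamma\tau\Rsf(\xbm^{k-1}))$, observes that it minimizes $\phi(\xbm) = \tfrac{1}{2\gamma}\|\xbm - (\xbm^{k-1} - \gamma\nabla h(\xbm^{k-1}))\|_2^2 + g(\xbm)$, whose gradient is $\lambda$-Lipschitz with $\lambda = 1/\gamma + L$, and combines $\nabla\phi(\xbmbar^k) = 0$ with the bound $\|\xbm^k - \xbmbar^k\|_2 \leq \alpha\varepsilon_{k-1}$ of Lemma~\ref{Lem:defineErrorTerm} to obtain $\phi(\xbm^k) \leq \phi(\xbm^{k-1}) + \tfrac{\lambda\alpha^2\varepsilon_{k-1}^2}{2}$; expanding $\phi$ and adding the same descent-lemma step for $h$ that you use finishes the proof, with no convexity of $g$ anywhere. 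You instead pay for the inexactness directly: Cauchy--Schwarz on the mismatch $\tau(\Rsf - \Rsfhat)(\xbm^{k-1})$, whose norm Assumption~\ref{As:InexactDistance} bounds by $\tau\varepsilon_{k-1}$, followed by Young's inequality. Your version never needs Lemma~\ref{Lem:defineErrorTerm} or the auxiliary iterate, and it yields a smaller error constant ($\lambda\alpha^2 = \gamma\tau^2$ in your convex version, $\gamma\tau^2/(1-\gamma L)$ after the nonconvex repair, versus the paper's $\gamma\tau^2(1+\gamma L)/(1-\gamma L)^2$); what it cannot do is dispense with curvature control of $g$ from below, which is exactly where the gap sits.
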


\begin{proof}
    Consider the iteration $k\geq 1$ of the algorithm 
\begin{equation*}
    \xbm^k = \prox_{\gamma g} \left( \xbm^{k-1} - \gamma \tau \Rsfhat(\xbm^{k-1})\right)~ \text{with} ~ \Rsfhat\defn \Isf - \Dhat_\sigma, 
\end{equation*}
$\Dhat_\sigma$ is the adapted restoration operator.  Since $\Dsf_\sigma$ is the ideal MMSE restoration operator specified in~\eqref{eq:mmsedenoiser}, $\xbmbar^k = \prox_{\gamma g} \left( \xbm^{k-1} - \gamma \tau \Rsf(\xbm^{k-1})\right)$ minimizes 
\begin{equation*}
    \phi(\xbm) \defn \frac{1}{2\gamma}\|\xbm - \left(\xbm^{k-1} - \tau \gamma \left(\xbm^{k-1} - \Dsf_\sigma(\xbm^{k-1})\right)\right)\|_2^2+ g(\xbm)
     = \frac{1}{2\gamma}\|\xbm - \left(\xbm^{k-1} -  \gamma \nabla h(\xbm^{k-1}) \right)\|_2^2 +g(\xbm), 
\end{equation*}
where we used the result of Lemma~\ref{Lem:regularizerDenRel}. From Assumption~\ref{As:lipofG}, we know that $\nabla g$ is $L-$Lipschitz continuous, which implies
\begin{equation*}
    \|\nabla \phi(\vbm) - \nabla \phi(\ubm)\|_2 \leq \lambda\|\vbm - \ubm\|_2~\text{with}~\lambda\defn\frac{1}{\gamma} +L.
\end{equation*}
By using the fact that $\phi$ is a smooth function from Assumption~\ref{As:lipofG} and since $\xbmbar^k$ minimize it, we have 
\begin{align*}
    \phi(\xbm^k) &\leq \phi(\xbmbar^k) + \nabla \phi(\xbmbar^k)^\Tsf(\xbm^k - \xbmbar^k)+ \frac{\lambda}{2}\|\xbm^k - \xbmbar^k\|_2^2 \leq \phi(\xbmbar^k) + \frac{\lambda\alpha^2 \varepsilon^2_{k-1}}{2}\\
   & \leq \min_{\xbm}\left\{\frac{1}{2\gamma}\|\xbm- \left(\xbm^{k-1} -  \gamma \nabla h(\xbm^{k-1})\right)\|_2^2 + g(\xbm) \right\}  + \frac{\lambda\alpha^2\varepsilon^2_{k-1}}{2}\\
   & \leq \frac{1}{2\gamma}\|\xbm^{k-1} - \left(\xbm^{k-1} -  \gamma \nabla h(\xbm^{k-1})\right)\|_2^2 + g(\xbm^{k-1})  + \frac{\lambda \alpha^2\varepsilon^2_{k-1}}{2}, 
\end{align*}
where we used the result of Lemma~\ref{Lem:defineErrorTerm} in the first inequality. 
By expanding the first term on the left side of the inequality and simplifying it, we have
\begin{equation}
    \label{eq:inequalityg}
    \nonumber g(\xbm^k) \leq g(\xbm^{k-1}) - \nabla h(\xbm^{k-1})^\Tsf(\xbm^k - \xbm^{k-1})
    -\frac{1}{2\gamma}\|\xbm^k - \xbm^{k-1}\|_2^2+  \frac{\lambda \alpha^2 \varepsilon^2_{k-1}}{2}. 
\end{equation}
From the $M-$Lipschitz continuity of $\nabla h$, we have 
\begin{equation}\label{eq:inequlityH}
     h(\xbm^k) \leq h(\xbm^{k-1}) + \nabla h(\xbm^{k-1})^\Tsf(\xbm^k - \xbm^{k-1})
    +\frac{M}{2}\|\xbm^k - \xbm^{k-1}\|_2^2. 
\end{equation}
By combining~\eqref{eq:inequalityg} and~\eqref{eq:inequlityH}, we have 
\begin{align}\label{eq:decf}
    \nonumber f(\xbm^k) = g(\xbm^k) +h(\xbm^k) &\leq g(\xbm^{k-1}) +h(\xbm^{k-1}) - \frac{1- \gamma M}{2\gamma}\|\xbm^k - \xbm^{k-1}\|_2^2 +  \frac{\lambda \alpha^2 \varepsilon^2_{k-1}}{2}\\
    & \leq f(\xbm^{k-1})  - \frac{1- \gamma M}{2\gamma}\|\xbm^k - \xbm^{k-1}\|_2^2 + \frac{\lambda \alpha^2 \varepsilon^2_{k-1}}{2}. 
\end{align}
\end{proof}

\begin{lemma}\label{Lem:defineErrorTerm}
Run PR-SANS with adapted restoration operator $\Dhat_\sigma$ for $k\geq 1$ iterations under Assumptions~\ref{as:nondegen}-\ref{As:boundednessF} using a step-size $0 < \gamma \leq\min(1/M, 1/L)$. Then we have 
\begin{equation*}
\|\xbm^k - \xbmbar^k\|_2 \leq \alpha \varepsilon_{k-1}, 
\end{equation*}
where $\alpha \defn \gamma \tau /(1 - \gamma \tau )$ is a positive constant. 
\end{lemma}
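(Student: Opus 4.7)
The key idea is that both $\xbm^k$ and $\xbmbar^k$ arise from \emph{the same} initial point $\xbm^{k-1}$ and the same proximal map $\prox_{\gamma g}$; they differ only because one uses the adapted operator $\Dhat_\sigma$ and the other uses the ideal operator $\Dsf_\sigma$. So my plan is to exploit this single-step structure and convert the pointwise bound on the operator discrepancy from Assumption~\ref{As:InexactDistance} into a bound on the iterate discrepancy.

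First I would write down the optimality condition for each proximal update. Since $g$ is continuously differentiable with Lipschitz gradient (Assumption~\ref{As:lipofG}), the prox step $\xbm^k = \prox_{\gamma g}(\xbm^{k-1} - \gamma\tau \Rsfhat(\xbm^{k-1}))$ is equivalent to
\begin{equation*}
\xbm^k + \gamma \nabla g(\xbm^k) = \xbm^{k-1} - \gamma\tau \Rsfhat(\xbm^{k-1}),
\end{equation*}
and analogously for $\xbmbar^k$ with $\Rsf$ in place of $\Rsfhat$. Subtracting these two identities and using $\Rsfhat - \Rsf = \Dsf_\sigma - \Dhat_\sigma$ gives
\begin{equation*}
(\xbm^k - \xbmbar^k) + \gamma\bigl(\nabla g(\xbm^k) - \nabla g(\xbmbar^k)\bigr) = \gamma\tau\bigl(\Dhat_\sigma(\xbm^{k-1}) - \Dsf_\sigma(\xbm^{k-1})\bigr).
\end{equation*}

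Next I would take $\ell_2$ norms. On the right-hand side, Assumption~\ref{As:InexactDistance} bounds the operator gap by $\varepsilon_{k-1}$, yielding $\gamma\tau\varepsilon_{k-1}$. On the left, the reverse triangle inequality combined with the $L$-Lipschitz continuity of $\nabla g$ gives
\begin{equation*}
\|\xbm^k - \xbmbar^k\|_2 - \gamma L \|\xbm^k - \xbmbar^k\|_2 \;\leq\; \gamma\tau\varepsilon_{k-1}.
\end{equation*}
Since $\gamma \leq 1/L$ makes $1 - \gamma L \geq 0$, rearranging produces a bound of the form $\|\xbm^k - \xbmbar^k\|_2 \leq \alpha\,\varepsilon_{k-1}$ with $\alpha$ proportional to $\gamma\tau$ divided by a $(1-\text{something})$ factor, which matches the structure of the claim.

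The main subtlety is matching the exact constant $\alpha = \gamma\tau/(1-\gamma\tau)$ stated in the lemma. The clean Lipschitz argument above produces $\gamma\tau/(1-\gamma L)$, so the remaining work is either to verify that $\tau$ controls the relevant Lipschitz constant here (for instance, through the identification $\nabla h = \tau\Rsf$ suggested by Lemma~\ref{Lem:regularizerDenRel}, which allows rewriting the perturbation term in terms of $\tau$ rather than $L$), or to show the claimed $\alpha$ is simply a looser-but-cleaner upper bound consistent with the step-size regime $0 < \gamma \leq \min(1/M, 1/L)$. I expect fixing this constant to be the only nontrivial step; everything else is standard manipulation of the prox optimality conditions.
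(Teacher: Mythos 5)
Your proposal follows essentially the same argument as the paper's proof: subtract the two prox optimality conditions, bound the right-hand side by $\gamma\tau\varepsilon_{k-1}$ via Assumption~\ref{As:InexactDistance}, and apply the reverse triangle inequality together with the $L$-Lipschitz continuity of $\nabla g$ to get $(1-\gamma L)\|\xbm^k - \xbmbar^k\|_2 \leq \gamma\tau\varepsilon_{k-1}$. The constant mismatch you flagged is not a gap in your argument but an inconsistency in the paper itself: the lemma statement gives $\alpha = \gamma\tau/(1-\gamma\tau)$, while the paper's own proof concludes with exactly your constant, $\alpha = \gamma\tau/(1-\gamma L)$.
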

\begin{proof}
By using the ideal $\Dsf_\sigma$ and adapted $\Dsfhat_\sigma$ restoration operators, we have the following expressions:
\begin{equation*}
    \xbmbar^k = \prox_{\gamma g }  ( \xbm^{k-1} - \gamma \tau \Rsf(\xbm^{k-1})) \quad\text{and}\quad 
   \xbm^k = \prox_{\gamma g } (\xbm^{k-1} - \gamma \tau \Rsfhat(\xbm^{k-1})). 
\end{equation*}
By using these expressions, the optimality conditions for proximal operators and Assumption~\ref{As:InexactDistance}, we have 
\begin{equation}
\label{eq:boundonDenXrelation}
\|\xbm^k - \xbmbar^k + \gamma (\nabla g(\xbm^k) - \nabla g(\xbmbar^k))\|_2 = \gamma \tau \|\Rsf(\xbm^{k-1}) - \Rsfhat(\xbm^{k-1})\|_2= \gamma \tau\|\Dsf_\sigma(\xbm^{k-1}) - \Dsfhat_\sigma(\xbm^{k-1})\|_2 \leq \gamma \tau\varepsilon_{k-1}. 
\end{equation}
By this equality and $\|a\|_2   - \|b\|_2 \leq \|a+b\|_2$ and $L$-Lipschitz continuity of  $\nabla g$, we have 
\begin{equation}
\label{eq:boundonX}
 \|\xbm^k - \xbmbar^k\|_2 -  \gamma \|(\nabla g(\xbm^k) - \nabla g(\xbmbar^k))\|_2\geq 
    (1-\gamma L) \|\xbm^k - \xbmbar^k\|_2.  
\end{equation}
From equations~\eqref{eq:boundonDenXrelation} and~\eqref{eq:boundonX} and the fact that $\gamma L \leq 1$, we have 
\begin{equation*}
    \|\xbm^k - \xbmbar^k\|_2 \leq \alpha \varepsilon_{k-1}, 
\end{equation*}
where $\alpha \defn \gamma \tau /(1- \gamma L)$. 
\end{proof}

\begin{lemma}\label{Lem:regularizerDenRel}
Let $\Dsf_\sigma$ be the ideal MMSE restoration operator in~\ref{eq:mmsedenoiser} corresponding to restoration problem~\ref{eq:noisemodel} under Assumption~\ref{as:nondegen} and~\ref{As:lipofG}. Then we have
$$
\nabla_\zbm h(\zbm) =\tau (\zbm - \Dsf_\sigma(\zbm)), 
$$
where $h$ is the regularization term in~\ref{eq:inverseproblem}.
\end{lemma}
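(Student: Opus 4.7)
The plan is to establish this identity, which is the well-known Tweedie/Miyasawa relation connecting the MMSE denoiser to the score of the noisy marginal. The starting point is the definition $h(\zbm) = -\tau\sigma^2 \log p_{\zbm}(\zbm)$ from~\eqref{eq:regden}, so that by the chain rule
\begin{equation*}
    \nabla_{\zbm} h(\zbm) = -\tau\sigma^2 \,\frac{\nabla_{\zbm} p_{\zbm}(\zbm)}{p_{\zbm}(\zbm)}.
\end{equation*}
The work is therefore to show that the score $\nabla_{\zbm} \log p_{\zbm}(\zbm)$ equals $-\tfrac{1}{\sigma^2}(\zbm - \Dsf_\sigma(\zbm))$.

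First I would write $p_{\zbm}$ as the convolution of $p_{\xbm}$ with a Gaussian. Since $\zbm = \xbm + \nbm$ with $\nbm \sim \Ncal(\zerobm,\sigma^2 \Ibm)$ independent of $\xbm$, we have
\begin{equation*}
    p_{\zbm}(\zbm) \,=\, \int_{\R^n} p_{\xbm}(\xbm)\, g_\sigma(\zbm-\xbm)\, \dsf \xbm,
    \qquad g_\sigma(\ubm) \defn (2\pi\sigma^2)^{-n/2} \exp\!\left(-\tfrac{1}{2\sigma^2}\|\ubm\|_2^2\right).
\end{equation*}
Under Assumption~\ref{as:nondegen} the marginal $p_{\zbm}$ is strictly positive and smooth on $\R^n$, and the Gaussian decay of $g_\sigma$ lets us differentiate under the integral sign by dominated convergence. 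Using $\nabla_{\zbm} g_\sigma(\zbm-\xbm) = -\tfrac{1}{\sigma^2}(\zbm-\xbm)\, g_\sigma(\zbm-\xbm)$ then yields
\begin{equation*}
    \nabla_{\zbm} p_{\zbm}(\zbm) \,=\, -\frac{1}{\sigma^2} \int_{\R^n} (\zbm-\xbm)\, p_{\xbm}(\xbm)\, g_\sigma(\zbm-\xbm)\, \dsf \xbm.
\end{equation*}

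Next I would divide both sides by $p_{\zbm}(\zbm)$ and recognize the posterior density $p_{\xbm|\zbm}(\xbm;\zbm) = p_{\xbm}(\xbm)\,g_\sigma(\zbm-\xbm)/p_{\zbm}(\zbm)$. This gives
\begin{equation*}
    \frac{\nabla_{\zbm} p_{\zbm}(\zbm)}{p_{\zbm}(\zbm)} \,=\, -\frac{1}{\sigma^2}\!\int_{\R^n}\!(\zbm-\xbm)\, p_{\xbm|\zbm}(\xbm;\zbm)\, \dsf\xbm \,=\, -\frac{1}{\sigma^2}\bigl(\zbm - \E[\xbm|\zbm]\bigr) \,=\, -\frac{1}{\sigma^2}\bigl(\zbm - \Dsf_\sigma(\zbm)\bigr),
\end{equation*}
where the last step uses the definition of the ideal MMSE restoration operator in~\eqref{eq:mmsedenoiser}. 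Substituting this into the chain-rule expression above gives $\nabla_{\zbm} h(\zbm) = \tau(\zbm - \Dsf_\sigma(\zbm))$, as claimed.

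The only delicate step is the exchange of gradient and integral, which I expect to be the main obstacle to make rigorous. Non-degeneracy of $p_{\xbm}$ (Assumption~\ref{as:nondegen}) together with the rapid decay of the Gaussian kernel produces an integrable dominating function uniformly in $\zbm$ over compact sets, and Assumption~\ref{As:lipofG} ensures the resulting $h$ is differentiable with the required regularity; these conditions are standard and align with the references cited after equation~\eqref{eq:regden}. Once differentiation under the integral is justified, the remainder is algebraic.
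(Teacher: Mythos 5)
Your proposal is correct and follows essentially the same route as the paper's proof: both write $p_{\zbm}$ as the Gaussian convolution of $p_{\xbm}$, differentiate under the integral to get $\nabla_\zbm p_\zbm(\zbm) = -\tfrac{1}{\sigma^2}\int (\zbm-\xbm)\,G_\sigma(\zbm-\xbm)\,p_\xbm(\xbm)\,\dsf\xbm$, and then divide by $p_\zbm$ to recognize the posterior expectation $\E[\xbm|\zbm] = \Dsf_\sigma(\zbm)$. The only difference is that you explicitly justify the interchange of gradient and integral via dominated convergence, a step the paper leaves implicit.
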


\begin{proof}
The density $p_{\zbm}$ is defined as 
\begin{equation}
\label{eq:densPrior}
   \nonumber p_\zbm(\zbm) = \int p_{\zbm|\xbm}(\zbm;\xbm) p_\xbm (\xbm)\dsf \xbm = \int G_\sigma(\zbm - \xbm) p_\xbm (\xbm)\dsf \xbm, 
\end{equation}
where $G_\sigma$ denotes the Gaussian density function with the standard deviation $\sigma >0$. 
The gradient of $h(\zbm) = -\tau \sigma^2 \log p_{\zbm}(\zbm)$ can be expressed as 
\begin{equation}\label{eq:prior}
    \nonumber\nabla_\zbm h(\zbm) = \frac{-\tau \sigma^2 \nabla_\zbm p_\zbm(\zbm)}{p_\zbm(\zbm)}= \frac{\tau  \int (\zbm - \xbm)G_\sigma(\zbm - \xbm) p_\xbm (\xbm)\dsf \xbm}{p_\zbm(\zbm)} =\tau (\zbm - \Dsf_\sigma(\zbm)). 
\end{equation}
where we used~\eqref{eq:regden}, ~\eqref{eq:densPrior},and~\eqref{eq:mmsedenoiser} in the first, second and, last equality, respectively. 
\end{proof}
\noindent\textbf{Remark 2.} 
For the fixed points of PR-SANS algorithm using ideal operator $\Dsf_\sigma$, we have 
$$
\xbmast = \prox_{\gamma g} (\xbmast - \gamma \tau \Rsf(\xbmast)). 
$$
By using the results from Lemma~\ref{Lem:regularizerDenRel} and the optimality condition for proximal operators, we have  
\begin{equation*}
 \frac{1}{\gamma} \left(\xbmast - \xbmast + \gamma \tau \Rsf(\xbmast) \right) + \nabla g(\xbmast) = \nabla g(\xbmast) + \tau \Rsf(\xbmast)=  \nabla g(\xbmast) + \nabla h(\xbmast)  =\bm{0},
\end{equation*}
where $\Rsf = \Isf -\Dsf$. The result states that any fixed-point $\xbmast\in \R^n$  of the PR-SANS with ideal restoration operator is indeed a stationary point of objective function $f$. 
\bibliographystyle{IEEEbib}
\bibliography{refs}
\end{document}